\pdfoutput=1
\RequirePackage{ifpdf}
\ifpdf % We are running pdfTeX in pdf mode
\documentclass[pdftex]{sigma}
\else
\documentclass{sigma}
\fi

\numberwithin{equation}{section}

\newtheorem{Theorem}{Theorem}[section]
\newtheorem{Corollary}[Theorem]{Corollary}
\newtheorem{Proposition}[Theorem]{Proposition}
 { \theoremstyle{definition}
\newtheorem{Example}[Theorem]{Example}
\newtheorem{Remark}[Theorem]{Remark} }

\newcommand\la{{\lambda}}
\newcommand\al{{\alpha}}
\newcommand\be{{\beta}}
\newcommand\rd{{\mathrm{d}}}
\newcommand\ri{{\mathrm{i}}}
\newcommand{\Q}{{\mathbb Q}}
\newcommand{\Z}{{\mathbb Z}}
\newcommand{\C}{{\mathbb C}}

\begin{document}

%\allowdisplaybreaks

\newcommand{\arXivNumber}{1705.01094}

\renewcommand{\thefootnote}{}

\renewcommand{\PaperNumber}{057}

\FirstPageHeading

\ShortArticleName{On Reductions of the Hirota--Miwa Equation}

\ArticleName{On Reductions of the Hirota--Miwa Equation\footnote{This paper is a~contribution to the Special Issue on Symmetries and Integrability of Dif\/ference Equations. The full collection is available at \href{http://www.emis.de/journals/SIGMA/SIDE12.html}{http://www.emis.de/journals/SIGMA/SIDE12.html}}}

\Author{Andrew N.W.~HONE, Theodoros E.~KOULOUKAS and Chloe WARD}
\AuthorNameForHeading{A.N.W.~Hone, T.E.~Kouloukas and C.~Ward}
\Address{School of Mathematics, Statistics and Actuarial Science, University of Kent,\\ Canterbury CT2 7NF, UK}
\Email{\href{mailto:A.N.W.Hone@kent.ac.uk}{A.N.W.Hone@kent.ac.uk}, \href{mailto:T.E.Kouloukas@kent.ac.uk}{T.E.Kouloukas@kent.ac.uk}, \href{mailto:chloe_ward@live.co.uk}{chloe$\_$ward@live.co.uk}}
\URLaddress{\url{https://www.kent.ac.uk/smsas/our-people/profiles/hone_andrew.html}}

\ArticleDates{Received May 02, 2017, in f\/inal form July 17, 2017; Published online July 23, 2017}

\Abstract{The Hirota--Miwa equation (also known as the discrete KP equation, or the octahedron recurrence) is a~bilinear partial dif\/ference equation in three independent variables. It is integrable in the sense that it arises as the compatibility condition of a linear system (Lax pair). The Hirota--Miwa equation has inf\/initely many reductions of plane wave type (including a quadratic exponential gauge transformation), def\/ined by a triple of integers or half-integers, which produce bilinear ordinary dif\/ference equations of Somos/Gale--Robinson type. Here it is explained how to obtain Lax pairs and presymplectic structures for these reductions, in order to demonstrate Liouville integrability of some associated maps, certain of which are related to reductions of discrete Toda and discrete KdV equations.}

\Keywords{Hirota--Miwa equation; Liouville integrable maps; Somos sequences; cluster algebras}

\Classification{70H06; 37K10; 39A20; 39A14; 13F60}

\renewcommand{\thefootnote}{\arabic{footnote}}
\setcounter{footnote}{0}

\section{Introduction}
The Hirota--Miwa equation, or discrete Kadomtsev--Petviashvili equation (discrete KP), is the bilinear partial dif\/ference equation
\begin{gather} \label{Hirota}
T_1T_{-1}=T_{2}T_{-2}+T_3T_{-3},
\end{gather}
which serves as a generating equation for the full KP hierarchy of partial dif\/ferential equations~\cite{dHirota,Miwa}. In the above,
the tau function $T$ is a function of three independent variables, $T=T(m_1,m_2,m_3)$, and subscripts denote shifts, i.e.\ $T_{\pm i}= T|_{m_i\rightarrow m_i \pm 1}$, for $i=1,2,3$. The equation (\ref{Hirota}) is integrable in the sense that it arises as the compatibility condition of the following linear system (Lax pair) \cite{Krich}:
\begin{gather}
T_{-1,3} \Psi_{1,2}+T \Psi_{2,3}= T_{2,3} \Psi, \nonumber\\
T \Psi_{-1,2}+T_{-1,3} \Psi_{2,-3}=T_{-1,2} \Psi\label{linHir}
\end{gather}
(for a comparison of alternative Lax pairs see~\cite{zabrodin}). From another viewpoint of discrete integrability, the equation (\ref{Hirota}) can be extended to a system with an arbitrary number of independent variables, in which case it satisf\/ies the multidimensional consistency property, and it can be further extended to a system with self-consistent sources (see~\cite{doliwalin} and references).

In this paper we consider reductions of (\ref{Hirota}) to bilinear ordinary dif\/ference equations, in order to show that they can be interpreted as discrete integrable systems in an appropriate sense, namely that Liouville's theorem for symplectic maps can be applied \cite{maeda, veselov}. Certainly it should be no surprise that reductions of a discrete integrable system to a lower number of independent variables produce integrable maps. Indeed, from a dif\/ferent point of view, one can consider the algebraic entropy of such reductions, and show that all one-dimensional reductions of (\ref{Hirota}) have quadratic degree growth \cite[Theorem~6.8]{mase}, so their algebraic entropy is zero. However, in general the connections between algebraic entropy, Lax integrability and Liouville integrability are rather subtle, and the bilinear ordinary dif\/ference equations that arise as plane wave reductions of (\ref{Hirota}) are not symplectic as they stand. In order to interpret them correctly, we make use of the properties of recurrences derived from cluster algebras \cite{FM}, which provide an appropriate presymplectic structure for bilinear dif\/ference equations \cite{FH} (see also \cite{fg,gsv,inouen} and references), and a further reduction procedure leads to symplectic maps of a certain kind, referred to in \cite{honeinoue} as U-systems. By applying the reduction process to the linear equations~(\ref{linHir}), we f\/ind a Lax pair for the corresponding bilinear ordinary dif\/ference equations, leading to the construction of conserved quantities, and then in each case the problem remains to check that this provides a suf\/f\/icient number of f\/irst integrals in involution for the associated U-system.

General properties of plane wave reductions of the Hirota--Miwa equation are described in Section~\ref{section2}, and one particular example is considered in full detail. This example turns out to be a member of a~two-parameter family of bilinear equations that arise from reductions of a lattice equation of discrete Toda type, considered in Section~\ref{section3}. In Section~\ref{section4} we consider another two-parameter family, consisting of pairs of bilinear equations related to discrete KdV reductions, and we consider another example in detail, before making some conclusions in Section~\ref{section5}.

\section{Plane wave reductions}\label{section2}

The Hirota--Miwa equation (\ref{Hirota}) admits the plane wave reduction
\begin{gather} \label{taf}
T(m_1,m_2,m_3)=a_1^{{m_1}^2}a_2^{{m_2}^2}a_3^{{m_3}^2}\tau_m, \qquad m=m_0+ \delta_1 m_1+\delta_2 m_2+\delta_3 m_3,
\end{gather}
where $m_0$ is arbitrary and the distinct values of the parameters $\delta_1$, $\delta_2$, $\delta_3$ are all chosen to be integers or half-integers (the case when any of the $\delta_j$ coincide is not interesting). The above expression includes a quadratic exponential gauge transformation, which allows the insertion of coef\/f\/icients into the right-hand side of the parameter-free equation (\ref{Hirota}). By substituting (\ref{taf}) into (\ref{Hirota}),
it follows that $\tau_m$ satisf\/ies the autonomous bilinear ordinary dif\/ference equation
\begin{gather*} %\label{RedHirota}
\tau_{m+\delta_1}\tau_{m-\delta_1}=\alpha \tau_{m+\delta_2}\tau_{m-\delta_2}+ \beta \tau_{m+\delta_3}\tau_{m-\delta_3},
\end{gather*}
where the ratios of the arbitrary parameters $a_j$ in (\ref{taf}) yield the coef\/f\/icients
\begin{gather*}\alpha=\frac{a_2^2}{a_1^2}, \qquad \beta=\frac{a_3^2}{a_1^2}.\end{gather*}
We can shift the last equation by $\delta_1$ and rewrite it as
\begin{gather} \label{RedHirota2}
\tau_{m+N}\tau_{m}=\alpha \tau_{m+\delta_1+\delta_2}\tau_{m+\delta_1-\delta_2}+ \beta \tau_{m+\delta_1+\delta_3}\tau_{m+\delta_1-\delta_3} \qquad \mathrm{with} \quad N=2\delta_1,
\end{gather}
and without loss of generality we can assume that
\begin{gather*} \delta_1> \max (\delta_2,\delta_3), \end{gather*}
so that the overall order of the recurrence (\ref{RedHirota2}) is $N$.

The iteration of the recurrence (\ref{RedHirota2}) is equivalent to iteration of the birational map $\varphi\colon$ $\mathbb{C}^N \rightarrow \mathbb{C}^N$, given by
\begin{gather} \label{mapphi}
\varphi(\tau_0,\tau_1, \ldots,\tau_{N-2},\tau_{N-1})= \left(\tau_1,\tau_2, \ldots,\tau_{N-1},\frac{\alpha \tau_{\delta_1+\delta_2}\tau_{\delta_1-\delta_2}+\beta \tau_{\delta_1+\delta_3} \tau_{\delta_1-\delta_3}}{\tau_0}\right),
\end{gather}
which is an example of a cluster map: it arises from a sequence of mutations in a cluster algebra, constructed from a quiver that is cluster mutation-periodic with period~1, with the coef\/f\/i\-cients~$\alpha$,~$\beta$ corresponding to frozen variables \cite{FM}. By making gauge transformations one can also consider the original equation (\ref{Hirota}) with non-autonomous coef\/f\/icients, and the same is true for its reductions (see~\cite{mase}). In due course we will also f\/ind it useful to allow the coef\/f\/icients~$\alpha$,~$\beta$ to depend on the index~$m$.

\subsection{Reduction of Lax pairs}

We can apply the plane wave reduction (\ref{taf}) at the level of the Lax pair. However, before doing so it is important to note that the compatibility condition of the linear system (\ref{linHir}) is not exactly~(\ref{Hirota}), but rather
\begin{gather} \label{rform}
R_{1,-3}=R, \qquad R=\frac{T_1T_{-1}-T_3T_{-3}}{T_2T_{-2}}.
\end{gather}
In order to get precisely (\ref{Hirota}), corresponding to $R\equiv 1$, it is necessary to augment the pair~(\ref{linHir}) with a third equation, to get a Lax triad. However, for our purposes it will be convenient to leave this slight ambiguity.

To make the reduction, we require that the tau function $T$ is given by (\ref{taf}), and take the wave function~$\Psi$ in the form
\begin{gather} \label{subs2}
\Psi(m_1,m_2,m_3)=\lambda_1^{m_1} \lambda_2^{m_2} \lambda_3^{m_3}T(m_1,m_2,m_3) \phi_m,
\end{gather}
where $\la_j$ correspond to spectral parameters, and upon setting
 \begin{gather*}\zeta=\lambda_2\lambda_1^{-1}, \qquad \xi=\big(a_1^2\lambda_1\lambda_2\big)^{-1}, \qquad \lambda_3=a_3^2 \lambda_2\end{gather*}
we can derive a corresponding linear system for the reductions (\ref{RedHirota2}) of (\ref{Hirota}).

\begin{Proposition} \label{propLS}
The plane wave reduction of the linear system \eqref{linHir} is
\begin{gather}
Y_m  \phi_{m+\delta_1+\delta_2}+\beta\zeta \phi_{m+\delta_2+\delta_3} =  \xi \phi_m, \nonumber\\
\phi_{m+\delta_1-\delta_2}-X_m \phi_{m+\delta_1-\delta_3} =  \zeta \phi_m,\label{LaxRe1}
\end{gather}
where
\begin{gather*}
X_m=\frac{\tau_{m+2\delta_1-2\delta_3}\tau_{m+\delta_1-\delta_2}}{\tau_{m+2\delta_1-\delta_2-\delta_3}\tau_{m+\delta_1-\delta_3}}, \qquad
 Y_m=\frac{\tau_{m+2\delta_1+\delta_2-\delta_3}\tau_{m}}{\tau_{m+\delta_1+\delta_2}\tau_{m+\delta_1-\delta_3}}.
\end{gather*}
The compatibility condition of \eqref{LaxRe1} is a Somos-$N$ recurrence with a periodic coefficient, given by
\begin{gather}\label{somosn}
\tau_{m+N}\tau_{m}=\alpha_m \tau_{m+\delta_1+\delta_2}\tau_{m+\delta_1-\delta_2}+ \beta \tau_{m+\delta_1+\delta_3}\tau_{m+\delta_1-\delta_3},
\qquad \alpha_{m+\delta_1-\delta_3}=\alpha_m,
\end{gather}
with $N=2\delta_1$.
\end{Proposition}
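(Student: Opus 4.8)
The plan is to substitute the plane wave ansatz \eqref{taf} together with the wave-function form \eqref{subs2} into the two scalar equations of \eqref{linHir}, strip off the quadratic-exponential and plane-wave prefactors common to all terms, and read off the reduced equations. After the substitution each term of \eqref{linHir} is a product of a power of each $a_j$, a power of each $\lambda_j$, two shifts of $\tau$ and one shift of $\phi$; for fixed $j$ the $m_j$-dependence of the power of $a_j$ is the same in every term of a given equation (the powers differ only by term-dependent \emph{constants}), and the powers of $\lambda_j$ differ only by $0$ or $1$, all of these offsets being read off directly from the shifts occurring in \eqref{linHir}. Dividing through by the greatest common monomial in the $a_j$ and $\lambda_j$ cancels all $m_j$-dependence from the prefactors and leaves a relation involving only $\tau$'s, $\phi$'s and the combinations $a_2^2/a_1^2=\alpha$, $a_3^2/a_1^2=\beta$, $\lambda_2/\lambda_1=\zeta$, $(a_1^2\lambda_1\lambda_2)^{-1}=\xi$ and $\lambda_3/(a_3^2\lambda_2)=1$. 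The last three equalities are exactly the substitutions fixed before the statement, and they are chosen so that the surviving constant coefficients become $\xi$ and $\beta\zeta$ in the first equation and $\zeta$ in the second; dividing each reduced equation by a suitable product $\tau_\bullet\tau_\bullet$ then brings the remaining $\tau$-factors into the form of the ratios $X_m$ and $Y_m$, which is \eqref{LaxRe1}.

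For the compatibility statement I would reduce \eqref{rform} directly rather than eliminate $\phi$ from \eqref{LaxRe1} by hand. Inserting \eqref{taf} into $R=(T_1T_{-1}-T_3T_{-3})/(T_2T_{-2})$, the factors $a_j^{2m_j^2}$ cancel between numerator and denominator and one is left with a function of the single variable $m$ alone,
\[
R=R(m)=\frac{\tau_{m+\delta_1}\tau_{m-\delta_1}-\beta\,\tau_{m+\delta_3}\tau_{m-\delta_3}}{\alpha\,\tau_{m+\delta_2}\tau_{m-\delta_2}} .
\]
Since the shift $(m_1,m_2,m_3)\mapsto(m_1+1,m_2,m_3-1)$ changes $m$ by $\delta_1-\delta_3$, the compatibility condition $R_{1,-3}=R$ of \eqref{linHir} becomes the statement that $R(m)$ is periodic, $R(m+\delta_1-\delta_3)=R(m)$. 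Putting $\alpha_m:=\alpha\,R(m+\delta_1)$ and clearing the denominator then turns this into
\[
\tau_{m+N}\tau_m=\alpha_m\,\tau_{m+\delta_1+\delta_2}\tau_{m+\delta_1-\delta_2}+\beta\,\tau_{m+\delta_1+\delta_3}\tau_{m+\delta_1-\delta_3},\qquad \alpha_{m+\delta_1-\delta_3}=\alpha_m,\qquad N=2\delta_1,
\]
which is \eqref{somosn}. Because \eqref{taf} and \eqref{subs2} express $T$ and $\Psi$ invertibly in terms of $\tau$, $\phi$ and the frozen parameters, the operation of forming a compatibility condition commutes with this reduction, so \eqref{somosn} is indeed the compatibility condition of the reduced pair \eqref{LaxRe1}; as a check one may also verify it directly, by solving each equation of \eqref{LaxRe1} for its top-order shift of $\phi$ and demanding that the two resulting expressions agree on a common shift, which after clearing $\phi$ yields precisely the bilinear relation in \eqref{somosn}.

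The only genuinely delicate part is the bookkeeping: keeping track of all the exponential and plane-wave offsets and, above all, of the index shifts, so that the residual $\tau$-factors assemble into the precise ratios $X_m$, $Y_m$ written in the statement rather than into some relabelled variant. The conceptually important point — and the reason the reduced recurrence is \eqref{somosn} with a periodic coefficient rather than the constant-coefficient equation \eqref{RedHirota2} — is that the compatibility of \eqref{linHir} is \eqref{rform}, not the Hirota--Miwa equation itself; the residual freedom $R\not\equiv1$ survives the reduction as the period-$(\delta_1-\delta_3)$ freedom in $\alpha_m$, and the reduced Lax pair \eqref{LaxRe1} accordingly carries this periodic coefficient in its compatibility condition.
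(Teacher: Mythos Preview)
Your proposal is correct and follows essentially the same route as the paper: substitute \eqref{taf} and \eqref{subs2} into \eqref{linHir}, cancel the common gauge and plane-wave factors to obtain \eqref{LaxRe1}, and then derive the compatibility condition by reducing \eqref{rform} rather than the Hirota--Miwa equation itself, so that the residual freedom in $R$ becomes the period-$(\delta_1-\delta_3)$ coefficient $\alpha_m$. The paper's proof also notes explicitly the overall index shift $m\to m+\delta_1-\delta_3$ needed to land on the stated forms of $X_m$, $Y_m$, which is exactly the ``relabelled variant'' bookkeeping you flag as the delicate part.
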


\begin{proof} This follows by substituting (\ref{taf}) and (\ref{subs2}) into (\ref{linHir}): the f\/irst linear equation produces the top equation in~(\ref{LaxRe1}), and the second equation in~(\ref{linHir}) almost immediately yields the bottom equation in~(\ref{LaxRe1}) after shifting the indices on each $\phi_j$ to $\phi_{j+\delta_1-\delta_2}$, except that we have also made an overall shift $m\to m+\delta_1-\delta_3$ in the dependent variables which appear in the combina\-tions~$X_m$,~$Y_m$ (which we are free to do, since the compatibility of the linear system gives autonomous relations between these quantities). Now observe that the parameter~$\alpha$ in~(\ref{RedHirota2}) does not appear anywhere in the linear system~(\ref{LaxRe1}), and the compatibility condition of the latter is obtained by substituting~(\ref{taf}) into~(\ref{rform}), which implies that $\tau_m$ satisf\/ies the non-autonomous bilinear recurrence~(\ref{somosn}) with the coef\/f\/icient~$\alpha_m$ of period $\delta_1-\delta_3$.
\end{proof}

Each iteration of (\ref{somosn}) is equivalent to an iteration of a map $\varphi_m\colon \C^N\to \C^N$, which is of the same form as (\ref{mapphi}), except that the coef\/f\/icient $\al$ is replaced by $\alpha_m$, depending on $m\bmod \delta_1-\delta_3$. The iterates of this map have a matrix Lax representation, and the dynamics preserves an associated spectral curve.

\begin{Corollary} \label{LaxPair}
The scalar Lax pair \eqref{LaxRe1} is equivalent to a~matrix linear system of size $K=\max (\delta_1-\delta_2, \delta_1-\delta_3)$, of the form
\begin{gather}\label{mlaxp}
 {\bf L}_m(\zeta)  \mathbf{\Phi}_m=\xi \mathbf{\Phi}_m, \qquad \mathbf{\Phi}_{m+1}= {\bf M}_m(\zeta)\mathbf{\Phi}_m .
\end{gather}
The compatibility condition of the latter system is the discrete Lax equation
\begin{gather}\label{dlax}
{\bf L}_{m+1}{\bf M}_m= {\bf M}_m{\bf L}_m,
\end{gather}
which preserves the spectral curve in the $(\zeta,\xi)$ plane given by
\begin{gather}\label{specc}
{\cal P}(\zeta,\xi)\equiv \det ({\bf L}_m(\zeta) -\xi \mathbf{1}) = 0.
\end{gather}
\end{Corollary}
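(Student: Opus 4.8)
The plan is to convert the scalar linear system \eqref{LaxRe1} into matrix form by a standard companion-matrix construction, packaging consecutive values of $\phi$ into a vector and then verifying that the resulting matrix compatibility condition reproduces \eqref{somosn}. Concretely, I would first decide on the size of the vector $\mathbf{\Phi}_m$. The two scalar equations in \eqref{LaxRe1} involve the shifts $\phi_{m+\delta_1+\delta_2}$, $\phi_{m+\delta_2+\delta_3}$, $\phi_m$ in the first equation and $\phi_{m+\delta_1-\delta_2}$, $\phi_{m+\delta_1-\delta_3}$, $\phi_m$ in the second. After normalising the index range (shifting so that the lowest shift is $\phi_m$), the spread of indices appearing is governed by $\delta_1-\delta_2$ and $\delta_1-\delta_3$, so the natural state vector is $\mathbf{\Phi}_m=(\phi_m,\phi_{m+1},\dots,\phi_{m+K-1})^{\mathrm T}$ with $K=\max(\delta_1-\delta_2,\delta_1-\delta_3)$ (here I am using that the $\delta_j$ are integers or half-integers, so after an overall rescaling of the lattice the shifts are integers). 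The first equation of \eqref{LaxRe1}, being an eigenvalue-type relation with spectral parameter $\xi$ appearing linearly, is rearranged to express one component in terms of the others and the coefficients $Y_m$, $\beta\zeta$; this furnishes the matrix $\mathbf{L}_m(\zeta)$ so that $\mathbf{L}_m(\zeta)\mathbf{\Phi}_m=\xi\mathbf{\Phi}_m$. The second equation of \eqref{LaxRe1} has no $\xi$ and relates $\mathbf{\Phi}_{m+1}$ to $\mathbf{\Phi}_m$ linearly; written out componentwise (most components shift trivially, $\phi_{m+1+j}=\phi_{(m+1)+j}$, and one component is fixed by the scalar relation involving $X_m$ and $\zeta$) it defines $\mathbf{M}_m(\zeta)$.

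Having produced $\mathbf{L}_m$ and $\mathbf{M}_m$, the next step is to show the matrix system \eqref{mlaxp} is \emph{equivalent} to the scalar pair, not merely implied by it: one direction is the construction just described, and the converse follows because reading off the trivial shift-components of the matrix relations recovers $\mathbf{\Phi}_m$ as a genuine window of a scalar sequence $\phi$, whereupon the nontrivial rows reproduce exactly the two scalar equations. Then I would derive the discrete Lax equation: demanding that the two matrix relations in \eqref{mlaxp} be compatible, i.e.\ that $\xi\mathbf{\Phi}_{m+1}$ computed two ways agree, gives $\mathbf{L}_{m+1}\mathbf{M}_m\mathbf{\Phi}_m=\xi\mathbf{M}_m\mathbf{\Phi}_m=\mathbf{M}_m\mathbf{L}_m\mathbf{\Phi}_m$, and since this holds on the orbit (equivalently, since $\mathbf{\Phi}_m$ can be taken generic as $\xi$ varies along the spectral curve) we get \eqref{dlax}. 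By Proposition~\ref{propLS} this compatibility condition is precisely the Somos-$N$ recurrence \eqref{somosn}, so \eqref{dlax} holds on solutions of \eqref{somosn}. Finally, invariance of the spectral curve is the usual argument: \eqref{dlax} says $\mathbf{L}_{m+1}=\mathbf{M}_m\mathbf{L}_m\mathbf{M}_m^{-1}$ (for generic $\zeta$, where $\mathbf{M}_m$ is invertible), so $\mathbf{L}_{m+1}(\zeta)$ and $\mathbf{L}_m(\zeta)$ are conjugate and hence $\det(\mathbf{L}_{m+1}(\zeta)-\xi\mathbf{1})=\det(\mathbf{L}_m(\zeta)-\xi\mathbf{1})$, i.e.\ the polynomial $\mathcal{P}(\zeta,\xi)$ in \eqref{specc} is independent of $m$.

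The main obstacle I anticipate is purely bookkeeping rather than conceptual: organising the index shifts so that a single state vector $\mathbf{\Phi}_m$ of size $K$ simultaneously accommodates both scalar equations, since the first equation of \eqref{LaxRe1} naturally lives on a window offset from that of the second, and one must perform a compensating shift (as was already done for $X_m$, $Y_m$ in the proof of Proposition~\ref{propLS}) and check that the coefficients $X_m$, $Y_m$ land in the correct matrix entries with the correct $\zeta$-dependence. One must also be slightly careful when $\delta_1-\delta_2\neq\delta_1-\delta_3$, so that $K$ is strictly the maximum of the two and the shorter equation contributes a band that is padded out; and the invertibility of $\mathbf{M}_m(\zeta)$ used in the last step should be noted to hold generically in $\zeta$, which is all that is needed since $\mathcal{P}$ is a polynomial. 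Beyond these indexing issues the argument is routine, the content having already been established at the scalar level in Proposition~\ref{propLS}.
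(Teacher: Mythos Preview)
Your proposal is correct and follows essentially the same approach as the paper: introduce the window vector $\mathbf{\Phi}_m=(\phi_m,\ldots,\phi_{m+K-1})^T$, turn the $\xi$-equation into the eigenvalue relation for $\mathbf{L}_m$ and the $\zeta$-equation into the shift relation for $\mathbf{M}_m$, then deduce \eqref{dlax} and spectral invariance by conjugacy. One small clarification on the bookkeeping you flag: the first scalar equation in \eqref{LaxRe1} has index spread $\delta_1+\delta_2$, which generally exceeds $K$, so fitting it into a $K$-window is not achieved by a mere index shift but by repeatedly using the \emph{second} scalar equation (which has order exactly $K$) to eliminate the higher shifts of $\phi$, as the paper does explicitly in Example~\ref{todaex}; once you try to write the matrices down this becomes apparent.
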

\begin{proof} Upon introducing the vector $\mathbf{\Phi}_m=(\phi_m,\phi_{m+1},\ldots, \phi_{m+K-1})^T$, the top equation in (\ref{LaxRe1}) becomes an eigenvalue equation for a matrix ${\bf L}_m$, while the bottom equation can be rewritten in matrix form with another matrix~${\bf M}_m$, where ${\bf L}_m$, ${\bf M}_m$ in~(\ref{mlaxp}) are $K\times K$ matrix functions of $\zeta$ and the dynamical variables. The compatibility condition~(\ref{dlax}) shows that this is an isospectral evolution, preserving the coef\/f\/icients of the polynomial~${\cal P}(\zeta,\xi)$.
\end{proof}

\begin{Example}\label{todaex}
Choosing $(\delta_1,\delta_2,\delta_3)=(4,3,0)$ in (\ref{somosn}) gives rise to the recurrence
\begin{gather} \label{874}
\tau_{m+8}\tau_{m}=\alpha_m \tau_{m+7}\tau_{m+1}+ \beta \tau_{m+4}^2, \qquad \alpha_{m+4}=\alpha_m.
\end{gather}
As we shall see below, the latter recurrence also arises as a reduction of a~5-point lattice equation of discrete Toda type appearing in~\cite{book}. In this case, the scalar linear system (\ref{LaxRe1}) consists of two equations, of orders 7 and 4 respectively, with coef\/f\/icients
\begin{gather*}
X_m=\frac{\tau_{m+8}\tau_{m+1}}{\tau_{m+5}\tau_{m+4}}, \qquad  Y_m=\frac{\tau_{m+11}\tau_{m}}{\tau_{m+7}\tau_{m+4}}.
\end{gather*}
After using the equation of fourth order to eliminate higher shifts of $\phi_m$ from the equation of order 7, this gives the system
\begin{gather*}
\phi_{m+4} = \frac{1}{X_m}(\phi_{m+1}-\zeta\phi_m), \\
\xi \phi_m  =  \zeta\left(\beta - \frac{Y_m}{X_{m+3}}\right)\phi_{m+3}+\frac{Y_m}{X_mX_{m+3}}(\phi_{m+1}-\zeta\phi_m).
\end{gather*}
The matrix entries in the Lax pair of Corollary \ref{LaxPair} can then be written in a fairly compact form in terms of the quantities $X_{m+j}$ for $0\leq j\leq 6$, $Y_{m+k}$ for $0\leq k\leq 3$ and $\beta$, $\zeta$, in which case the compatibility conditions are
\begin{gather}\label{co1}
X_mY_{m+4}= X_{m+7}Y_{m+1}, \qquad X_mY_{m+4}- X_{m+7}Y_{m}=\beta X_{m+7}(X_m-X_{m+3}).
\end{gather}
When these equations are rewritten in terms of the tau function $\tau_m$, the f\/irst of them is just a~tautology, while the second one says that
\begin{gather*}
\frac{\tau_{m+12}\tau_{m+4}-\beta\tau_{m+8}^2}{\tau_{m+11}\tau_{m+5}}= \frac{\tau_{m+8}\tau_{m}-\beta\tau_{m+4}^2}{\tau_{m+7}\tau_{m+1}},
\end{gather*}
which is equivalent to (\ref{874}). We can further use the equation (\ref{874}) to rewrite the entries of~${\bf L}_m$,~${\bf M}_m$ in terms of
the tau function and~$\alpha_m$ as well as~$\beta$, $\zeta$, but rather than doing this here we follow~\cite{Chloe} and introduce the quantity
\begin{gather}\label{zdef}
z_m =\frac{\tau_{m+5}\tau_m}{\tau_{m+4}\tau_{m+1}},
\end{gather}
so that $X_m=z_{m+1}z_{m+2}z_{m+3}$, $ Y_m=z_m X_{m}X_{m+3}$, to f\/ind (setting $m\to 0$ for convenience)
\begin{gather}
\mathbf{L} =
\begin{pmatrix}
 -\zeta z_0 & z_0 & 0 & \zeta(\beta-z_0 z_{1} z_{2} z_{3}) \\
\!\zeta^2\left(z_{4}-\frac{ \beta}{z_{1} z_{2} z_{3}}\right) & \!\zeta\left(\frac{\beta}{z_{1} z_{2} z_{3}}- z_{1} - z_{4} \right) & z_{1} & 0 \\
0 &\zeta^2 \left( z_{5}-\frac{\beta}{z_{2} z_{3} z_{4}} \right) & \!\zeta \left( \frac{ \beta}{z_{2} z_{3} z_{4}}-z_{2}-z_{5}\right) & z_{2} \\
-\frac{\zeta}{z_{1} z_{2}} &\frac{1}{z_{1} z_{2}} & \zeta^2 \left(z_{6}-\frac{\beta}{z_{3} z_{4} z_{5}}\right) & \!\zeta\left(\frac{ \beta}{z_{3} z_{4} z_{5}} -z_{3}- z_{6}\right)\!
\end{pmatrix},\nonumber
\\ \label{tolax}
\mathbf{M} =
\begin{pmatrix}
0 & 1 & 0 & 0 \\
0 & 0 & 1 & 0 \\
0 & 0 & 0 & 1 \\
\frac{-\zeta}{z_1 z_2 z_3} & \frac{1}{z_1 z_2 z_3} & 0 & 0
\end{pmatrix},
\end{gather}
with ${\bf L}={\bf L}_0$, ${\bf M}={\bf M}_0$, in which case the second equation in (\ref{co1}) becomes a recurrence of order~7 for~$z_m$, namely
\begin{gather}\label{co2}
z_{m+7}-z_m =\beta\left(\frac{1}{z_{m+4}z_{m+5}z_{m+6}}- \frac{1}{z_{m+1}z_{m+2}z_{m+3}}\right).
\end{gather}

The corresponding spectral curve (\ref{specc}) is of genus 9, and takes the form
\begin{gather}\label{todaspec}
\Gamma\colon \  \xi^4+H_1\zeta\xi^3+H_2\zeta^2\xi^2+\big(H_3\zeta^3-1\big)\xi+H_4\zeta^7 +\beta\zeta^4=0,
\end{gather}
where
\begin{gather}
H_1 =\sum_{k=0}^6 z_k -\beta\sum_{k=0}^2 \frac{1}{z_{k+1}z_{k+2}z_{k+3}}, \nonumber\\
H_2 =  \sum_{k=0}^6 z_kz_{k+1}+z_kz_{k+2}\nonumber\\
\hphantom{H_2 =}{} -
\beta\left(\frac{1}{z_1z_3} +\frac{1}{z_2z_3} +\frac{1}{z_2z_4} +\frac{1}{z_3z_4} +\frac{1}{z_3z_5} +\frac{z_0+z_1}{z_3z_4z_5} +\frac{z_0+z_6}{z_2z_3z_4} +\frac{z_5+z_6}{z_1z_2z_3}\right)\nonumber\\
\hphantom{H_2 =}{}+ \beta^2\left(\frac{1}{z_1z_2^2z_3^2z_4} +\frac{1}{z_1z_2z_3^2z_4z_5} +\frac{1}{z_2z_3^2z_4^2z_5} \right) ,\nonumber\\
H_3= \sum_{k=0}^6z_kz_{k+1}z_{k+2} -\beta\left(\frac{1}{z_3} +\frac{z_0}{z_3z_4} +\frac{z_6}{z_2z_3} +\frac{z_0z_1}{z_3z_4z_5} +\frac{z_0z_6}{z_2z_3z_4} +\frac{z_5z_6}{z_1z_2z_3} \right)\nonumber\\
\hphantom{H_3=}{}  +\beta^2\left(\frac{1}{z_1z_2z_3^2z_4} +\frac{1}{z_2z_3^2z_4z_5} +\frac{z_0}{z_2z_3^2z_4^2z_5} +\frac{z_6}{z_1z_2^2z_3^2z_4} \right) -\frac{\beta^3}{z_1z_2^2z_3^3z_4^2z_5} ,\nonumber\\
H_4 = \frac{1}{z_3}\prod_{k=0}^3\left(z_kz_{k+3}-\frac{\beta}{z_{k+1}z_{k+2}}\right),\label{h1}
\end{gather}
with indices read $\bmod$ 6 where necessary. These coef\/f\/icients of the spectral curve provide 4 functionally independent integrals for the map in~7 dimensions def\/ined by~(\ref{co2}). We shall consider the question of Liouville integrability when we return to this example in Section~\ref{sympl} below.
\end{Example}

\begin{Remark} \label{ab}
The replacement $(\delta_1,\delta_2,\delta_3)\to(\delta_1,\delta_3,\delta_2)$ leads to a recurrence with the same terms as
(\ref{somosn}), but with the roles of the coef\/f\/icients $\alpha,\beta$ reversed. For instance, in the previous example, choosing the parameters
$(\delta_1,\delta_2,\delta_3)=(4,0,3)$ gives $\delta_1-\delta_3=1$, so with both $\alpha$ and $\beta$ as constant coef\/f\/icients,
 the recurrence is just
\begin{gather*}
\tau_{m+8}\tau_{m}=\alpha \tau_{m+7}\tau_{m+1}+ \beta \tau_{m+4}^2.
\end{gather*}
\end{Remark}

Another method for f\/inding f\/irst integrals of discrete KP reductions was given in \cite{mq}, based on reduction of conservation laws.

\subsection{Somos recurrences and cluster maps}
The plane wave reductions of the discrete KP equation (\ref{Hirota}) are recurrence relations of Somos type. They are also particular examples of cluster maps, which arise from cluster mutations of quivers which are periodic with period~1~\cite{FM}, and this provides an appropriate presymplectic structure (i.e.\ a~closed 2-form of constant rank) that is preserved by the dynamics.

The general Somos-$N$ recurrence is a quadratic recurrence relation of the form
\begin{gather} \label{somosN}
x_{m+N}x_m=\sum_{j=1}^{\left\lfloor{\frac{N}{2}}\right\rfloor} a_j  x_{m+N-j}x_{m+j},
\end{gather}
where $a_j$ are coef\/f\/icients. The case when there are only two non-zero coef\/f\/icients $a_j$, $a_k$ on the right-hand side corresponds to the reductions \eqref{RedHirota2} of the discrete KP equation (\ref{Hirota}), while the case of three non-zero coef\/f\/icients $a_j$, $a_k$, $a_\ell$ with $j+k+\ell=0$ $(\bmod\, N)$ arises from reductions of the discrete BKP equation. These particular cases are also sometimes referred to as 3-term or 4-term Gale--Robinson recurrences \cite{jmz}, respectively (where the total number of terms in the equation is counted), and are the only cases which display the Laurent phenomenon \cite{fz}. In general, if there are more than three terms on the right-hand side of~(\ref{somosN}) then the recurrence does not appear to be integrable: the growth of degrees of the terms, or the growth of logarithmic heights of generic iterates in~$\Q$, is exponential. However, for certain choices of coef\/f\/icients and initial data such recurrences with more terms can still be produced by identities for abelian functions~\cite{beh}.

On the other hand, (with the inclusion of coef\/f\/icients $\alpha,\beta$) the reductions~\eqref{RedHirota2} of the discrete KP equation are particular cases of recurrences of the form
\begin{gather} \label{quivrec}
x_{m+N}x_m=\prod_{j=1}^{N-1} x_{m+j}^{[b_{1,j+1}]_{+}} + \prod_{j=1}^{N-1} x_{m+j}^{[-b_{1,j+1}]_{+}},
\end{gather}
which arise from sequences of mutations in a coef\/f\/icient-free cluster algebra. In the above, $[b]_{+}=\max (b,0)$, and the exponents appearing on the right-hand side belong to the f\/irst row of the exchange matrix $B=(b_{ij})$, an $N \times N$ skew-symmetric integer matrix which def\/ines a~quiver (directed graph) consisting of $N$ nodes without 1- or 2-cycles: the rule is that $[b_{ij}]_+$ is the number of arrows from node~$i$ to node~$j$. It was shown by Fordy and Marsh \cite{FM} that the quiver has cluster mutation-periodicity with period~1, meaning that a~single mutation of the quiver is equivalent to a~cyclic permutation of the nodes, if\/f the matrix entries of~$B$ satisfy
\begin{gather}
b_{j,N} = b_{1,j+1}, \qquad j=1,\ldots,N-1, \nonumber\\
b_{j+1,k+1} = b_{j,k}+b_{1,j+1}[-b_{1,k+1}]_{+}-b_{1,k+1}[-b_{1,j+1}]_+, \qquad 1\le j,k \le N-1.\label{percon1}
\end{gather}
In this case, the above conditions mean that matrix $B$ is completely determined by the elements of its f\/irst row, which are the exponents appearing in (\ref{quivrec}). The cluster map def\/ined by (\ref{quivrec}) is the birational map
\begin{gather}
\begin{array}{@{}lrcl}
\varphi\colon &  \C^N &\rightarrow&   \C^N, \\
&    (x_1,\ldots,x_{N-1},x_{N}) &  \mapsto&  \displaystyle \bigg(x_2,\ldots,x_{N} ,
x_1^{-1}\bigg(\prod_j x_{j+1}^{[b_{1,j+1}]_{+}} + \prod_j x_{j+1}^{[-b_{1,j+1}]_{+}}\bigg)\bigg).
\end{array}\label{cluster}
\end{gather}
Furthermore, the conditions (\ref{percon1}) are also necessary and suf\/f\/icient for the 2-form
\begin{gather} \label{presymp}
\omega=\sum_{i<j}\frac{b_{ij}}{x_i x_j} \rd x_i \wedge \rd x_j,
\end{gather}
to be an invariant presymplectic form for the map, i.e.\ $\varphi^*\omega=\omega$~\cite{FH}. The form~(\ref{presymp}) is log-canonical, i.e.\ it is constant in the coordinates~$\log x_i$.

For the conditions (\ref{percon1}) to hold, the non-zero entries in the f\/irst row of $B$ must be palindromic. Hence, in the case of the discrete KP reductions, with a sum of two quadratic terms on the right-hand side, the pattern of non-zero terms is either of the form $1,\ldots, -2,\ldots ,1$ or $1,\ldots,-1,\ldots,-1,\ldots,1$ (up to an overall choice of sign), and the other entries of~$B$ are completely f\/ixed by this pattern.

\begin{Example} \label{b}
Up to sending $B\to -B$, the exchange matrix for the recurrence in Example~\ref{todaex} is
\begin{gather*}B=\left(\begin{matrix}
0 & 1 & 0 & 0 & -2 & 0 & 0 & 1 \\
-1 & 0 & 1 & 0 & 2 & -2 & 0 & 0 \\
0 & -1 & 0 & 1 & 0 & 2& -2 & 0 \\
0 & 0 & -1 & 0 & 1 & 0 & 2 & -2 \\
2 & -2 & 0 & -1 & 0 & 1 & 0 & 0 \\
0 & 2 & -2 & 0 & -1 & 0 & 1 & 0 \\
0 & 0 & 2 & -2 & 0 & -1 & 0 & 1 \\
-1 & 0 & 0 & 2 & 0 & 0 & -1 & 0
\end{matrix} \right). \end{gather*}
\end{Example}

\subsection{Reduction to symplectic maps}\label{sympl}

In order to investigate Liouville integrability in the context of cluster maps, it is convenient to make a reduction to a symplectic map on a space of even dimension $r=\operatorname{rank}B$, that is the space of leaves of the null foliation for $\omega$. This is achieved by choosing a~$\Z$-basis ${\bf v}_1,\ldots,{\bf v}_r$ for $\operatorname{im}B\cap\Z^N$, and then mapping to a corresponding set of Laurent monomials ${\bf u}=(u_1,\ldots,u_r)$ in the coordinates ${\bf x}=(x_j)$, via
\begin{gather}
\begin{array}{@{}lrcl}
 \pi \colon & \C^N &\rightarrow & \C^r, \\
&  {\bf x} & \mapsto & {\bf u} =\big({\bf x}^{{\bf v}_1},\ldots,{\bf x}^{{\bf v}_r}\big).
\end{array} \label{basis}
\end{gather}
Theorem~2.6 in~\cite{FH} says that this produces a symplectic birational map~$\hat\varphi$ in the reduced coordinates~${\bf u}$. By a further
ref\/inement of this result \cite[Proposition 3.9]{honeinoue} one can choose a special $\Z$-basis consisting of palindromic vectors, such that
$\hat\varphi$ is equivalent to iteration of a~single recurrence relation called the U-system.

\begin{Theorem}\label{FH} For any cluster map $\varphi$ given by \eqref{cluster} with associated exchange matrix~$B$ satisfying~\eqref{percon1}, there is a~palindromic $\Z$-basis for $\operatorname{im}B\cap\Z^N$ $($unique up to an overall sign$)$, such that under the reduction~\eqref{basis}, the map $\hat\varphi$ with $\hat{\varphi} \circ \pi=\pi \circ \varphi$ is equivalent to the iteration of the corresponding U-system, of the form
\begin{gather*}%\label{usys}
u_{m+r}u_m=\mathcal{F}(u_{m+1},\ldots,u_{m+r-1}),
\end{gather*}
for a certain rational function $\cal F$. Moreover, $\hat\varphi$ preserves the symplectic form $\hat\omega$ such that \smash{$\pi^{*} \hat{\omega}=\omega$}, which is log-canonical in the coordinates ${\bf u}=(u_1,\ldots,u_r)$.
\end{Theorem}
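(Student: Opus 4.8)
The plan is to build on the two cited results---Theorem~2.6 of \cite{FH} and Proposition~3.9 of \cite{honeinoue}---and essentially assemble them into the stated form. The starting point is that the presymplectic form $\omega$ in \eqref{presymp} is preserved by $\varphi$, which is exactly the content of the Fordy--Hone construction once the periodicity conditions \eqref{percon1} hold; its rank is $r=\operatorname{rank}B$, and its kernel defines an integrable null foliation whose leaf space carries an induced symplectic structure. First I would recall that $\operatorname{im}B$ is an $r$-dimensional sublattice of $\Z^N$ (using that $B$ is skew-symmetric with integer entries, so $\operatorname{im}B\cap\Z^N$ is a rank-$r$ lattice), and that the monomial map $\pi$ in \eqref{basis} associated with any $\Z$-basis of this lattice descends $\varphi$ to a well-defined birational map $\hat\varphi$ on $\C^r$ with $\hat\varphi\circ\pi=\pi\circ\varphi$; this descent, together with the fact that the pushed-forward form $\hat\omega$ defined by $\pi^*\hat\omega=\omega$ is symplectic and log-canonical in the $u_i$, is precisely Theorem~2.6 of \cite{FH}.

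The new ingredient beyond \cite{FH} is the existence of a \emph{palindromic} $\Z$-basis, and this is where the work lies. Here I would exploit the period-$1$ cluster mutation-periodicity: the single mutation is equivalent to the cyclic shift $\rho$ of the coordinate indices, and correspondingly there is a linear map (a companion-type matrix built from the first row of $B$) acting on $\Z^N$ that intertwines $\rho$ with the action of $\varphi$ on monomial exponents. Because the exponents in \eqref{quivrec} are palindromic, this shift structure is compatible with the palindromic involution $\sigma\colon (v_1,\dots,v_N)\mapsto (v_N,\dots,v_1)$ on exponent vectors; one checks $\sigma B\sigma = -B^{T}=B$ (or $-B$, depending on conventions), so $\sigma$ preserves $\operatorname{im}B\cap\Z^N$. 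Then a basis adapted to the eigenspace decomposition of $\sigma$ (the $\pm1$-eigenlattices) can be chosen inside $\operatorname{im}B\cap\Z^N$, and after a suitable unimodular change one obtains a basis of palindromic vectors. The uniqueness up to overall sign follows because the palindromic constraint, together with the ordering coming from the shift $\rho$, pins down the basis vectors once one fixes orientation---this is Proposition~3.9 of \cite{honeinoue}, which I would cite for the detailed lattice-theoretic argument rather than reproduce.

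Finally I would verify that with this palindromic basis the reduced map $\hat\varphi$ takes the claimed recurrence form. The point is that a palindromic basis $\{{\bf v}_1,\dots,{\bf v}_r\}$ can be ordered so that $\rho$ (the shift on $\C^N$) acts on the images $u_i={\bf x}^{{\bf v}_i}$ essentially as $u_i\mapsto u_{i+1}$ for $i<r$, with $u_r$ mapping to a rational function of $u_1,\dots,u_r$; more precisely, since $\varphi$ itself is a shift-plus-update, $\hat\varphi$ sends $(u_1,\dots,u_r)$ to $(u_2,\dots,u_r,\mathcal{F}(u_1,\dots,u_r))$, and palindromy forces $\mathcal{F}$ to depend on $u_1$ only through the product $u_1 u_r$ (equivalently, the last component of $\hat\varphi$, when solved for, gives $u_{m+r}u_m=\mathcal F(u_{m+1},\dots,u_{m+r-1})$). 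This last identification---that the updated coordinate combined with the lagged one is a function of the intermediate coordinates alone---is the U-system property, and is again established in \cite[Proposition~3.9]{honeinoue}. The main obstacle is precisely the palindromic basis step: showing that the involution $\sigma$ restricts to $\operatorname{im}B\cap\Z^N$ and that one can extract from it a genuine $\Z$-basis (not merely a $\Q$-basis) of palindromic vectors, which requires care with the integrality/unimodularity of the change of basis; once that lattice fact is in hand, the reduction statement and the log-canonical form of $\hat\omega$ follow by the functorial pushforward of $\omega$ as in \cite{FH}.
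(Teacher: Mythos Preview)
Your proposal is correct and matches the paper's approach: the paper does not give a proof of Theorem~\ref{FH} at all, but simply states it as a combination of Theorem~2.6 in \cite{FH} (for the descent to a symplectic map with log-canonical $\hat\omega$) and Proposition~3.9 in \cite{honeinoue} (for the palindromic $\Z$-basis and the U-system form), exactly as you outline. Your sketch of the underlying lattice argument (the involution $\sigma$ preserving $\operatorname{im}B\cap\Z^N$ and the shift structure forcing the recurrence shape) goes beyond what the paper records, but is consistent with the cited sources.
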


The above theorem immediately applies to the plane wave reductions~\eqref{RedHirota2} of discrete KP, with coordinates $x_j\to \tau_j$, and the symplectic structure obtained by reduction is unaltered even when non-autonomous coef\/f\/icients are included, in particular when $\al$ and/or $\beta$ are allowed to be periodic.

\begin{Example} \label{todaf} For the matrix $B$ in Example~\ref{b}, $\operatorname{rank}B=6$, and the vector ${\bf v}_1=(1,-2,1,0,0$, $0,0,0)^T$, together with ${\bf v}_2,\ldots,{\bf v}_6$ obtained by shifting the non-zero block to the right, provides a palindromic basis for $\operatorname{im}B$, so the reduction~(\ref{basis}) gives
\begin{gather}\label{udef}
u_m=\frac{\tau_m\tau_{m+2}}{\tau_{m+1}^2}.
\end{gather}
Then the U-system corresponding to (\ref{874}) is the non-autonomous recurrence
\begin{gather}\label{utoda}
u_{m+6}u_m = \frac{\al_m u_{m+5}u_{m+4}^2u_{m+3}^3u_{m+2}^2u_{m+1}+\beta}{u_{m+5}^2u_{m+4}^3u_{m+3}^4u_{m+2}^3u_{m+1}^2}, \qquad \al_{m+4}=\al_m.
\end{gather}
Upon applying Theorem \ref{FH}, we see that (\ref{utoda}) preserves the symplectic form
\begin{gather*}
\hat\omega = \sum_{i<j}\frac{\hat{b}_{ij}}{u_i u_j} \rd u_i \wedge \rd u_j,
\qquad \hat{B}=\big(\hat{b}_{ij}\big)=\left(
\begin{matrix}
0 & 1 & 2 & 3 & 2 & 1 \\
-1 & 0 & 2 & 4 & 4 & 2 \\
-2 & -2 & 0 & 3 & 4 & 3 \\
-3 & -4 & -3 & 0 & 2 & 2 \\
-2 & -4 & -4 & -2 & 0 & 1 \\
-1 & -2 & -3 & -2 & -1 & 0
\end{matrix} \right).
\end{gather*}
Equivalently, each iteration of the corresponding six-dimensional map~$\hat\varphi_m$, which depends on $m\bmod 4$, preserves the nondegenerate log-canonical Poisson bracket (obtained by inverting the matrix~$\hat B$) given by
\begin{gather}\label{tpb}
\{u_i,u_j \} =c_{j-i}u_iu_j ,    \qquad 0 \leq i<j \leq 5,\end{gather}
with $c_1=1$, $c_2=c_5=0$,   $c_3=-c_4=-2$.

Upon comparing (\ref{udef}) with (\ref{zdef}), we obtain the formula
\begin{gather}\label{zform}
z_m =u_mu_{m+1}u_{m+2}u_{m+3}
\end{gather} for all $m$. We can use this to write $z_j$ for $0\leq j\leq 2$ in terms of a set of initial coordinates $u_0,u_1,\ldots,u_5$ for the U-system~(\ref{utoda}). However, for higher shifts of~$z_j$ we note the identity
\begin{gather*}
z_{m+3}z_m=\alpha_mu_{m+3}+\frac{\beta}{z_{m+1}z_{m+2}},
\end{gather*} which follows from (\ref{zform}) and (\ref{utoda}), so after substituting for $z_3$, $z_4$, $z_5$, $z_6$ in (\ref{tolax}) it is clear that, in addition to $u_j$ for $0\leq j\leq 5$, the parameters $\al_0$, $\al_1$, $\al_2$, $\al_3$ will also appear in ${\bf L}={\bf L}_0$ and ${\bf M}={\bf M}_0$ (and similarly for ${\bf L}_m$, ${\bf M}_m$, replacing the index $j$ on each variable in ${\bf L}$, ${\bf M}$ by $m+j$). In this way we obtain the Lax representation for the U-system itself, while from the spectral curve (\ref{todaspec}), the coef\/f\/icients $H_1$, $H_2$, $H_3$ in (\ref{h1}) can be pulled back by the formula~(\ref{zform}), to provide 3 independent functions of $u_j$, in involution with respect to the bracket~(\ref{tpb}) in 6~dimensions (see~\cite{Chloe} for explicit formulae in the case $\al_m=\al= {\rm const}$). Pulling back the fourth coef\/f\/icient we f\/ind
\begin{gather}\label{h4form}
H_4=\al_0\al_1\al_2\al_3,
\end{gather}
which (like any cyclically symmetric function of $\al_0$, $\al_1$, $\al_2$, $\al_3$) is a trivial f\/irst integral of the U-system. (By an abuse of notation, in (\ref{h4form}) and elsewhere we use the same symbol to denote a function and its pullback.) Thus we see that in the autonomous case we have a~6-dimensional symplectic map $\hat\varphi$ given by
\begin{gather*} %\label{ustoda}
\hat\varphi(u_0,u_1,u_2,u_3,u_4,u_5)=\left(u_1, u_2,u_3,u_4,u_5,\frac{\beta+\alpha u_1 u_2^2 u_3^3 u_4^2 u_5}{u_{0}u_1^2 u_2^3 u_3^4 u_4^3 u_5^2}\right),
\end{gather*}
obtained from (\ref{utoda}) by f\/ixing $\al_m=\al= {\rm const}$, with 3 commuting f\/irst integrals, so this is an integrable system in the Liouville sense. In the non-autonomous case, we have instead a family of symplectic maps $\hat\varphi_m$ cycling with $m\bmod 4$, but we can interpret the fourfold composition $\hat\varphi_3\circ\hat\varphi_2\circ\hat\varphi_1\circ\hat\varphi_0$ as an autonomous system with the three commuting invariants $H_1$, $H_2$, $H_3$ described above, so again this is a Liouville integrable system.

These results also lead to an interpretation of the 7-dimensional map def\/ined by (\ref{co2}) as an integrable system. Indeed, the expression~(\ref{zform}) means that the bracket (\ref{tpb}) can be lifted to the coordinates~$z_j$, $0\leq j\leq 6$, to give a Poisson bracket of rank 6 specif\/ied by
\begin{gather*}
\{z_0,z_1\}=0=\{z_0,z_2\}, \qquad \{z_0,z_3\}=-z_0z_3+\beta (z_1z_2)^{-1}, \\ \{z_0,z_4\}=z_0z_4-\beta^2 \big(z_1^2z_2^2z_3^2\big)^{-1},
\\
\{z_0,z_5\}  = -\beta^2 \big(z_1z_2^2z_3^2z_4\big)^{-1} + \beta^3 \big(z_1^2z_2^3z_3^3z_4^2\big)^{-1}, \\
\{z_0,z_6\}  = -\beta^2 \big(z_1z_2z_3^2z_4z_5\big)^{-1} + \beta^3 \big(z_1^2z_2^2z_3^3z_4^2z_5^2\big)^{-1}(z_1+z_5) -\beta^4 \big(z_1^2z_2^3z_3^4z_4^3z_5^2\big)^{-1};
\end{gather*}
all other brackets follow by shifting indices. The preceding results imply that, as functions of~$z_j$, the quantities~$H_1$,~$H_2$,~$H_3$ in~(\ref{h1}) are in involution with respect to the above bracket, while from the expression~(\ref{h4form}) it follows that~$H_4$ is a Casimir for this bracket; this can also be verif\/ied directly from~(\ref{h1}).
\end{Example}

\begin{Remark} In the previous example, the formula~(\ref{tolax}) is a ``big Lax pair'' for the system: the spectral curve $\Gamma$ in (\ref{todaspec}) has genus~9, while the Liouville tori (level sets of f\/irst integrals) are only 3-dimensional. One way to understand this is by noting that~$\Gamma$ is invariant under the action of~$C_3$ (the cyclic group of order~3), generated by $(\zeta,\xi)\to (e^{2\pi\ri /3}\zeta, e^{2\pi\ri /3}\xi)$, so that $\Gamma$ is a~threefold cover of the curve $\tilde{\Gamma}= \Gamma/C_3$, ramif\/ied at $(0,0)$ and $(\infty,\infty)$, and $\tilde{\Gamma}$ has genus~3. A~more direct way to obtain the curve $\tilde{\Gamma}$ is to note that the system is one of a family of reductions of a lattice equation of discrete Toda type, considered in the next section (see~\eqref{Todaeq} below). This reduction procedure yields $2\times 2$ Lax pairs, which will be described in future work~\cite{DTKQg}.
\end{Remark}

\section{Reductions of discrete Toda type}\label{section3}

In this section we brief\/ly consider the two-parameter family of plane wave reductions of discrete KP with $(\delta_1,\delta_2,\delta_3)=(P,P-Q,0)$, for integers $P>Q$, which includes Example~\ref{todaex} when $P=4$, $Q=1$. These examples arise from travelling wave solutions of the f\/ive-point lattice equation
\begin{gather} \label{Todaeq}
\frac{V_{k,l}}{V_{k+1,l}}-\frac{V_{k-1,l}}{V_{k,l}}+\alpha \left(\frac{V_{k+1,l-1}}{V_{k,l}}-\frac{V_{k,l}}{V_{k-1,l+1}}\right)=0,
\end{gather}
which for $\alpha=p q$ appears in \cite{book}, and can be considered as a discrete time Toda equation \cite{Date}. If we impose the periodicity
\begin{gather*}V_{k+Q,l-P}=V_{k,l}, \end{gather*}
which is called the $(Q,-P)$ periodic reduction of equation \eqref{Todaeq}, then we may write
\begin{gather*}V_{k,l}=v_m,\qquad \mathrm{where} \quad m=k P+l Q \end{gather*}
is a travelling wave variable, and $v_m$ satisf\/ies the ordinary dif\/ference equation
\begin{gather} \label{todared}
\frac{v_m}{v_{m+P}}-\frac{v_{m-P}}{v_m}+\alpha \left(\frac{v_{m+P-Q}}{v_{m}}-\frac{v_m}{v_{m+Q-P}}\right)=0.
\end{gather}

\begin{Proposition} \label{prTodalift} Suppose that
\begin{gather*}v_m=\frac{\tau_m}{\tau_{m+Q}}\end{gather*} is a solution of~\eqref{todared}. Then $\tau_m$ is a solution of
\begin{gather} \label{todaKP}
\tau_{m+2P}\tau_{m}=\alpha \tau_{m+2P-Q}\tau_{m+Q}+ \beta_m \tau_{m+P}^2, \qquad \beta_m=\beta_{m+Q},
\end{gather}
and the converse is also true.
\end{Proposition}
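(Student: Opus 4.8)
The plan is to substitute the ansatz $v_m = \tau_m/\tau_{m+Q}$ directly into \eqref{todared}, clear denominators, and recognise the resulting polynomial identity as a periodicity statement for a bilinear combination of the $\tau_m$.

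First I would rewrite each of the four ratios in \eqref{todared} in terms of $\tau$: for instance $v_m/v_{m+P} = \tau_m\tau_{m+P+Q}/(\tau_{m+P}\tau_{m+Q})$, $v_{m-P}/v_m = \tau_{m-P}\tau_{m+Q}/(\tau_{m+Q-P}\tau_m)$, and similarly for $v_{m+P-Q}/v_m$ and $v_m/v_{m+Q-P}$. Multiplying \eqref{todared} through by $\tau_m\tau_{m+P}\tau_{m+Q}\tau_{m+Q-P}$ --- which is exactly the least common multiple of the four denominators, once one uses $\tau_{m-P+Q} = \tau_{m+Q-P}$ --- converts the equation into a single polynomial relation among the $\tau_m$, free of denominators.

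The key step is then to observe that this polynomial relation organises into two pairs of terms: the two $\alpha$-free terms combine with the two $\alpha$-proportional terms so that the identity takes the form $\tau_m^2\, G_{m+Q-P} = \tau_{m+Q}^2\, G_{m-P}$, where $G_m := \tau_{m+2P}\tau_m - \alpha\, \tau_{m+2P-Q}\tau_{m+Q}$ is precisely the left-hand side minus the first right-hand term of \eqref{todaKP}. Performing the shift $m \mapsto m+P$ rewrites this as $\tau_{m+P}^2\, G_{m+Q} = \tau_{m+P+Q}^2\, G_m$, i.e.\ the quantity $\beta_m := G_m/\tau_{m+P}^2$ satisfies $\beta_{m+Q} = \beta_m$; and $G_m = \beta_m\tau_{m+P}^2$ is exactly \eqref{todaKP}. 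Since each manipulation (clearing the common denominator, regrouping, the index shift) is an equivalence provided no $\tau_m$ vanishes, running the argument backwards --- starting from \eqref{todaKP}, using $Q$-periodicity of $\beta_m$ to obtain $\tau_{m+P}^2 G_{m+Q} = \tau_{m+P+Q}^2 G_m$, shifting back and dividing by $\tau_m\tau_{m+P}\tau_{m+Q}\tau_{m+Q-P}$ --- recovers \eqref{todared} for $v_m = \tau_m/\tau_{m+Q}$, which gives the converse.

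The main obstacle is the bookkeeping in the regrouping step: one has to spot that, after clearing denominators, the $\alpha$-independent and $\alpha$-dependent parts of the equation each reassemble into the \emph{same} bilinear expression $G$ evaluated at two arguments differing by $Q$ --- this is what forces the coefficient of $\tau_{m+P}^2$ in \eqref{todaKP} to be periodic with period exactly $Q$ rather than a constant. A minor point to state carefully is the genericity assumption ($\tau_m \ne 0$) needed for the equivalence to hold in both directions, since the passage between $v$ and $\tau$ involves dividing by products of the $\tau_m$.
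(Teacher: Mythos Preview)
Your proposal is correct and follows essentially the same route as the paper's proof: substitute $v_m=\tau_m/\tau_{m+Q}$ into \eqref{todared}, rearrange, and recognise the result as the statement that $\beta_m:=(\tau_{m+2P}\tau_m-\alpha\tau_{m+2P-Q}\tau_{m+Q})/\tau_{m+P}^2$ satisfies $\beta_{m+Q}=\beta_m$. The paper simply writes down the equality $\beta_m=\beta_{m+Q}$ after the substitution ``up to an overall shift'', whereas you spell out the intermediate step of clearing denominators and regrouping into $\tau_{m+P}^2\,G_{m+Q}=\tau_{m+P+Q}^2\,G_m$; your added remarks on the converse and the nonvanishing hypothesis for $\tau_m$ are welcome details that the paper leaves implicit.
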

\begin{proof}
By setting $v_m=\frac{\tau_m}{\tau_{m+Q}}$ in \eqref{todared}, up to an overall shift we f\/ind that
\begin{gather*} %\label{lifttoda}
\frac{\tau_{m+2P}\tau_{m}-\alpha \tau_{m+2P-Q}\tau_{m+Q}}{\tau_{m+P}^2}= \frac{\tau_{m+2P+Q}\tau_{m+Q}-\alpha  \tau_{m+2P}\tau_{m+2Q}} {\tau_{m+P+Q}^2},
\end{gather*}
and denoting the left-hand side above by $\beta_m$ we see that this quantity is periodic with period $Q$, which yields the bilinear equation in~(\ref{todaKP}). Conversely, any solution of~(\ref{todaKP}) with a~$Q$-periodic coef\/f\/icient $\beta_m$ provides a solution of~\eqref{todared}.
\end{proof}

\begin{Example} In the running example above, with $(\delta_1,\delta_2,\delta_3)=(4,3,0)$, we have
\begin{gather*}
v_m = \frac{\tau_m}{\tau_{m+1}}, \qquad \mathrm{with} \quad u_m=\frac{\tau_m \tau_{m+2}}{\tau_{m+1}^2}=\frac{v_m}{v_{m+1}} \end{gather*}
being a solution of the autonomous version of the U-system~(\ref{utoda}), where $\al_m=\al=\mathrm{const}$. After shifting once to eliminate $\beta$ from the U-system, an equation of order 7 for $u_m$ arises, and this is equivalent to a relation of order~8 for~$v_m$, namely
 \begin{gather*}\frac{v_{m+4}}{v_{m+8}}-\frac{v_{m}}{v_{m+4}}+\alpha \left(\frac{v_{m+7}}{v_{m+4}}-\frac{v_{m+4}}{v_{m+1}}\right)=0, \end{gather*}
which is the $(1,-4)$ periodic reduction of the discrete Toda equation~\eqref{Todaeq}.
\end{Example}

In the general $(Q,-P)$ periodic reduction of \eqref{Todaeq}, one can consider a Poisson structure for the variables~$v_m$ in~\eqref{todared}, and a ``small'' ($2\times 2$) Lax representation for this reduction and the associated U-system. We propose to treat these details elsewhere~\cite{DTKQg}.

\section{Reductions of discrete KdV type}\label{section4}

In this section, for a pair of integers $L>M$, we consider two dif\/ferent plane wave reductions of discrete KP, corresponding to the choices
\begin{gather*}
(\delta_1,\delta_2,\delta_3) = \left( L + \frac{M}{2}, \frac{M}{2}, L-\frac{M}{2}\right) \qquad \mathrm{or}\qquad
 \left( M + \frac{L}{2}, \frac{L}{2}, \left|M-\frac{L}{2}\right|\right).
\end{gather*}
It turns out that these two dif\/ferent reductions are very closely related to each other: they both correspond to the $(L,M)$ periodic reduction of the lattice KdV equation~\cite{Hirota}
\begin{gather} \label{KdV}
V_{k+1,l}-V_{k,l+1}=\alpha\left(\frac{1}{V_{k,l}}-\frac{1}{V_{k+1,l+1}}\right).
\end{gather}
This connection leads to an alternative $2\times 2$ Lax pair for these discrete KP reductions and their associated U-systems, as well as linking the Liouville integrability of the latter with that of the corresponding discrete KdV reduction.

Observe that if the
 $(L,M)$ reduction is imposed on (\ref{KdV}), then we have
\begin{gather*}V_{k+L,l+M}=V_{k,l}\implies V_{k,l}=v_m, \qquad m=l L-k M, \end{gather*}
and we can write the following ordinary dif\/ference equation in terms of the travelling wave variable $m$:
\begin{gather} \label{RedkdV}
v_{m+{L}+{M}}-v_m=\alpha\left(\frac{1}{v_{m+L}}-\frac{1}{v_{m+M}}
\right).
\end{gather}
\begin{Remark}Up to sending $v_m\to 1/v_m$ and redef\/ining $\al$, the $(L,M)$ reduction and the $(L,-M)$ reduction of \eqref{KdV} are equivalent. The parameter $\al$ can be removed by scaling.
\end{Remark}

\begin{Proposition} \label{kdvred}
Suppose that
\begin{gather}\label{vtau}
v_m=\frac{\tau_m \tau_{m+L+M}}{\tau_{m+M} \tau_{m+L}}\end{gather}
is a solution of \eqref{RedkdV}. Then $\tau_m$ satisfies the following two bilinear equations
\begin{gather}
\tau_{m+2L+M}\tau_m = \beta_m \tau_{m+L+M} \tau_{m+L} - \alpha \tau_{m+2 L}\tau_{m+M}, \qquad \beta_{m+M}=\beta_m, \label{1HKdV} \\
\tau_{m+2M+L}\tau_{m} = \beta'_m \tau_{m+L+M} \tau_{m+M}+\alpha \tau_{m+2 M}\tau_{m+L},\qquad \beta'_{m+L}=\beta'_m. \label{2HKdV}
\end{gather}
Conversely, if $\tau_m$ is a solution of either~\eqref{1HKdV} or~\eqref{2HKdV}, then $v_m$ given by~\eqref{vtau} satisfies~\eqref{RedkdV}.
\end{Proposition}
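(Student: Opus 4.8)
The plan is to substitute the Ansatz \eqref{vtau} into the scalar KdV reduction \eqref{RedkdV} and then reorganise the resulting rational identity in two inequivalent ways; each grouping turns out to be precisely one of \eqref{1HKdV}, \eqref{2HKdV} together with the $M$- (respectively $L$-) periodicity of its coefficient.

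First I would write each of the four terms of \eqref{RedkdV} in terms of $\tau$ via \eqref{vtau} --- for instance $v_{m+L+M}=\tau_{m+L+M}\tau_{m+2L+2M}/(\tau_{m+L+2M}\tau_{m+2L+M})$ and $1/v_{m+L}=\tau_{m+L+M}\tau_{m+2L}/(\tau_{m+L}\tau_{m+2L+M})$, and likewise for $v_m$ and $1/v_{m+M}$. Each of the four terms carries the factor $\tau_{m+L+M}$, which cancels, leaving (under the usual genericity assumption that the relevant $\tau_{m+j}$ are nonzero)
\begin{gather*}
\frac{\tau_{m+2L+2M}}{\tau_{m+L+2M}\tau_{m+2L+M}}-\frac{\tau_m}{\tau_{m+M}\tau_{m+L}}-\frac{\alpha\tau_{m+2L}}{\tau_{m+L}\tau_{m+2L+M}}+\frac{\alpha\tau_{m+2M}}{\tau_{m+M}\tau_{m+L+2M}}=0 .
\end{gather*}
Now pair the first term (coming from $v_{m+L+M}$) with the fourth (from $\alpha/v_{m+M}$), whose denominators share the factor $\tau_{m+L+2M}$, against the second (from $v_m$) and the third (from $\alpha/v_{m+L}$), whose denominators share $\tau_{m+L}$; writing each pair over a common denominator and clearing the surviving factors gives
\begin{gather*}
\frac{\tau_{m+2L+2M}\tau_{m+M}+\alpha\tau_{m+2M}\tau_{m+2L+M}}{\tau_{m+L+2M}}=\frac{\tau_m\tau_{m+2L+M}+\alpha\tau_{m+2L}\tau_{m+M}}{\tau_{m+L}} .
\end{gather*}
Let $\beta_m$ denote the common value $\big(\tau_m\tau_{m+2L+M}+\alpha\tau_{m+2L}\tau_{m+M}\big)/\big(\tau_{m+L+M}\tau_{m+L}\big)$, which is exactly \eqref{1HKdV} solved for its coefficient. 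Evaluating this definition at $m$ and at $m+M$ shows that the right-hand side above equals $\beta_m\tau_{m+L+M}$ and the left-hand side equals $\beta_{m+M}\tau_{m+L+M}$; hence the displayed identity is equivalent to $\beta_{m+M}=\beta_m$, which is \eqref{1HKdV}.

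Regrouping the same four terms the other way --- the first term with the third (denominators sharing $\tau_{m+2L+M}$), the second with the fourth (sharing $\tau_{m+M}$) --- and clearing denominators yields
\begin{gather*}
\frac{\tau_{m+2L+2M}\tau_{m+L}-\alpha\tau_{m+2L}\tau_{m+L+2M}}{\tau_{m+2L+M}}=\frac{\tau_m\tau_{m+L+2M}-\alpha\tau_{m+2M}\tau_{m+L}}{\tau_{m+M}} ;
\end{gather*}
with $\beta'_m:=\big(\tau_m\tau_{m+L+2M}-\alpha\tau_{m+2M}\tau_{m+L}\big)/\big(\tau_{m+L+M}\tau_{m+M}\big)$, which is \eqref{2HKdV} solved for its coefficient, the two sides equal $\beta'_m\tau_{m+L+M}$ and $\beta'_{m+L}\tau_{m+L+M}$ (evaluating the definition at $m$ and at $m+L$), so this identity is equivalent to $\beta'_{m+L}=\beta'_m$, giving \eqref{2HKdV}. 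For the converse one runs each chain backwards: assuming \eqref{1HKdV} with $\beta_{m+M}=\beta_m$ reproduces the first displayed identity above, which on multiplying back by $\tau_{m+L+M}$ and splitting into four fractions becomes \eqref{RedkdV} for $v_m$ given by \eqref{vtau}; starting instead from \eqref{2HKdV} with $\beta'_{m+L}=\beta'_m$ one uses the second displayed identity in the same way. I do not expect a genuine obstacle here: the only real work is keeping track of the dozen or so shifted tau functions and noticing the two complementary pairings of the denominators, after which the periodicity of the coefficients --- and with it both directions of the proposition --- follows at once.
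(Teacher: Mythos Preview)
Your proof is correct and follows essentially the same route as the paper's: after substituting \eqref{vtau} into \eqref{RedkdV}, the paper records precisely your two displayed identities (each divided through by $\tau_{m+L+M}$) and observes that they express $\beta_m=\beta_{m+M}$ and $\beta'_m=\beta'_{m+L}$, respectively. Your version simply makes the intermediate cancellation of $\tau_{m+L+M}$ and the two complementary pairings of denominators explicit, which the paper leaves to the reader.
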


\begin{proof} With~(\ref{vtau}),~\eqref{RedkdV} is equivalent to either of the two equalities
\begin{gather*}
 \frac{\tau_{m+2L+M}\tau_m+\alpha \tau_{m+2 L}\tau_{m+M} }{\tau_{m+L}\tau_{m+L+M}} = \frac{\tau_{m+2L+2M}\tau_{m+M}+\alpha \tau_{m+2L+M}\tau_{m+2M} }{\tau_{m+L+M}\tau_{m+L+2M}},
\\
\frac{\tau_{m+2M+L}\tau_m-\alpha \tau_{m+2 M}\tau_{m+L} }{\tau_{m+M}\tau_{m+M+L}}= \frac{\tau_{m+2M+2L}\tau_{m+L}-\alpha \tau_{m+2 M+L}\tau_{m+2L} }{\tau_{m+M+L}\tau_{m+M+2L}},
\end{gather*}
from which the result follows.
\end{proof}

\begin{Remark}
In \cite[Proposition 8]{HKQT}, it was proved that the solutions of $(d - 1, -1)$ reductions of the lattice KdV equation \eqref{KdV} are Liouville integrable, using the observation that (with $\al=-1$) these reductions are given in terms of a tau function that satisf\/ies the bilinear recurrence relation
\begin{gather*}\tau_{m+d+1} \tau_m = \beta_m \tau_{m+d} \tau_{m+1} + \tau_{m+d-1} \tau_{m+2},\end{gather*}
with the coef\/f\/icient $\beta_{m}$ having period $d-1$. The above result extends this observation to the general $(L,M)$ reduction, and shows that in each case there are actually two dif\/ferent bilinear equations involved.
\end{Remark}

\subsection[$2 \times 2$ Lax pairs]{$\boldsymbol{2 \times 2}$ Lax pairs}
Without loss of generality, we can assume from now on that~$L$, $M$ are coprime (since otherwise the equations split into copies of systems in lower
dimension). According to Corollary~\ref{LaxPair}, the reductions~\eqref{1HKdV} and~\eqref{2HKdV} of discrete KP each admit a Lax representation,
with $L \times L$ and $\min{(L,2M)} \times \min{(L,2M)}$ Lax matrices respectively. However, in these cases there is also a~$2\times2$ Lax representation, derived from the Lax representation of the lattice KdV equation.

The lattice KdV equation \eqref{KdV} admits a zero curvature representation with a~$2\times 2$ Lax pair. Specif\/ically, equation \eqref{KdV} is equivalent to
\begin{gather} \label{LaxKdV}
{\bf L}(V_{k,l+1},V_{k+1,l+1},\la){\bf M}(V_{k,l},\la)={\bf M}(V_{k+1,l},\la){\bf L}(V_{k,l},V_{k+1,l},\la),
\end{gather}
where $\la$ is a spectral parameter and
\begin{gather} \label{LaxPair2}
{\bf L}(V,W,\la)=\left(\begin{matrix} V-\frac{\alpha}{W} & \la \\ 1 & 0 \end{matrix}\right), \qquad {\bf M} (V,\la)=\left( \begin{matrix} V & \la \\  1 & \frac{\al}{V} \end{matrix}\right).
\end{gather}

It is well known that the Lax representation of quadrilateral lattice equations gives rise to Lax representations of their periodic reductions (see, e.g., \cite{staircase} and references). Hence, a $2 \times 2$ Lax representation can be obtained for the~$(L,M)$ periodic reduction~\eqref{RedkdV} and consequently for the corresponding discrete KP reductions, as well as their associated U-systems. First integrals of these systems are derived from the spectrum of their corresponding monodromy matrix.

\subsection{Example: a discrete KdV reduction of order 5}

The Liouville integrability of \eqref{RedkdV} in the case $L=4$, $M=1$ follows from the results of~\cite{HKQT}, so here we consider a dif\/ferent example of order 5, namely the case $L=3$, $M=2$. In the latter case, the recurrences~(\ref{1HKdV}) and~(\ref{2HKdV}) become
\begin{gather}
\tau_{m+8}\tau_{m} = -\alpha \tau_{m+6}\tau_{m+2} + \beta_m \tau_{m+3}\tau_{m+5}, \qquad \beta_{m+2}=\beta_m, \label{32m1} \\
\tau_{m+7}\tau_m = \alpha \tau_{m+4}\tau_{m+3} +\beta'_m \tau_{m+2}\tau_{m+5}, \qquad \beta'_{m+3}=\beta'_m, \label{32m2}
\end{gather}
respectively.

In each case, the associated exchange matrix $B$ has rank~4. The corresponding U-systems are obtained by setting
\begin{gather}\label{usystau}
u_m=\frac{\tau_{m} \tau_{m+4}}{\tau_{m+1} \tau_{m+3}}, \qquad u'_m=\frac{\tau_{m} \tau_{m+3}}{\tau_{m+1} \tau_{m+2}},
\end{gather}
to get
\begin{gather}
u_m u_{m+1}u_{m+2}u_{m+3}u_{m+4} = \beta_m-\al u_{m+2} , \label{32um1} \\
u_m' u_{m+1}'({u}_{m+2}')^2 u_{m+3}'u_{m+4}' = \beta'_m u_{m+2}'+\al, \label{32um2}
\end{gather}
respectively. By Theorem~\cite{FH}, the latter are equivalent to iteration of 4-dimensional birational symplectic maps $\hat\varphi_m^{(j)}$, $j=1,2$, where
\begin{gather*}
\hat\varphi_0^{(1)}(u_0,u_1,u_2,u_3) = \left(u_1,u_2,u_3,\frac{\beta_0-\alpha u_2 }{u_0 u_1 u_2 u_3}\right), \\
\hat\varphi_0^{(2)} (u_0',u_1',u_2',u_3')= \left(u_1',u_2',u_3',\frac{\alpha+\beta'_0u_2' }{u_0' u_1' (u_2')^2 u_3'}\right) ,
\end{gather*}
respectively (we have just written the case $m=0$), with the nondegenerate Poisson bracket being specif\/ied by
 \begin{gather}\label{br1}
\{u_j,u_{j+1}\}_1 =0, \qquad \{u_j,u_{j+2}\}_1 =u_ju_{j+2}, \qquad \{u_j,u_{j+3}\}_1 =-u_ju_{j+3}
\end{gather} for the f\/irst one and{\samepage
 \begin{gather}\label{br2}
\{u_j',u_{j+1}'\}_2 =0, \qquad \{u_j',u_{j+2}'\}_2 =u_j'u_{j+2}', \qquad \{u_j',u_{j+3}'\}_2 =-u_j'u_{j+3}'
\end{gather}
for the second. Observe that the two brackets~(\ref{br1}) and~(\ref{br2}) are identical.}

On the other hand, from Proposition \ref{kdvred}, by setting
\begin{gather}\label{vtaus} v_j=\frac{\tau_j \tau_{j+5}}{\tau_{j+2} \tau_{j+3}},\end{gather}
the discrete KP reductions (\ref{32m1}) and (\ref{32m2}) both yield the $(3,2)$ periodic reduction of the lattice KdV equation, which is equivalent to the 5-dimensional birational map
\begin{gather} \label{32KdV}
(v_0,v_1,v_2,v_3,v_4)\mapsto \left(v_1,v_2,v_3,v_4,v_0+\alpha\left(\frac{1}{v_{3}}-\frac{1}{v_{2}}\right)\right).
\end{gather}

{\bf Lax representation and f\/irst integrals.} Corollary~\ref{LaxPair} produces a $3\times 3$ Lax representation for both (\ref{32m1}) and (\ref{32m2}), and for their corresponding U-systems (\ref{32um1}) and (\ref{32um2}), with a~trigonal spectral curve. However, it is more straightforward to apply the~(3,2) periodic reduction to the discrete KdV Lax pair~\eqref{LaxKdV}, directly giving a Lax pair for~(\ref{32KdV}) in terms of the coordinates~$v_j$, which can then be rewritten in terms of the~$u_j$ or~$u_j'$ as desired.

The monodromy matrix of the $(3,2)$ KdV periodic reduction, obtained by the staircase method \cite{staircase}, is
\begin{gather*}\mathcal{M}(v_0,v_1,v_2,v_3,v_4; \la)={\bf M}(v_3){\bf L}(v_1,v_3){\bf M}(v_4){\bf L} (v_2,v_4){\bf L} (v_0,v_2),\end{gather*}
where the $2 \times 2$ matrices ${\bf M}$ and ${\bf L}$ are given in \eqref{LaxPair2}. The map \eqref{32KdV} satisf\/ies the equation
\begin{gather*}\mathcal{M}(v_0,v_1,v_2,v_3,v_4) \mathcal{L}=\mathcal{L}\mathcal{M}(v_1,v_2,v_3,v_4,v_5),\end{gather*}
where
\begin{gather*}\mathcal{L}={\bf L}(v_0,v_2)^{-1}{\bf M}(v_5){\bf L}(v_3,v_5){\bf L}(v_1,v_3) \qquad \text{and} \qquad v_5=v_0+\alpha\left(\frac{1}{v_{3}}-\frac{1}{v_{2}}\right).\end{gather*}
The associated hyperelliptic spectral curve in the $(\la,\mu)$ plane is of genus~2, being given by
\begin{gather*}
\det ({\cal M}(\la)-\mu \mathbf{I})\equiv \mu^2-\Pi(\la)\mu -\la^3(\la -\al )^2=0,
\end{gather*}
where the trace of the monodromy matrix ${\cal M}(\la)=\mathcal{M}(v_0,v_1,v_2,v_3,v_4;\la)$ has the form
\begin{gather*}
\Pi (\la)=H_2\la^2+H_1\la +H_0.
\end{gather*}
From the trace of the monodromy matrix we f\/ind three functionally independent f\/irst integrals for the map \eqref{32KdV}, which are conveniently chosen as
\begin{gather*}
I_1=\Pi(0)=H_0, \qquad I_2=H_2, \qquad I_3=\Pi(\al)=H_2\al^2+H_1\al+H_0,
\end{gather*}
so that they have the explicit form
\begin{gather*}
I_1 = -\frac{1}{v_2}(\alpha-v_0 v_2)(\alpha-v_1 v_3)(\alpha-v_2 v_4), \\
I_2 = v_0+v_1+v_2+v_3+v_4-\frac{\alpha}{v_2}, \\
I_3 = v_2(\alpha+v_0 v_3)(\alpha+v_1 v_4).
\end{gather*}

In order to obtain the corresponding f\/irst integrals in terms of the variables for the two dif\/ferent U-systems, we compare~(\ref{usystau}) with~(\ref{vtaus}) to see that we can write the variables for~\eqref{32KdV} in two dif\/ferent ways, as
\begin{gather*}
v_m=u_m u_{m+1}=u_m' u_{m+1}'u_{m+2}'.
\end{gather*}
Now to pull back $I_1$, $I_2$, $I_3$ to the f\/irst U-system, we must iterate the recurrence \eqref{32um1} to get
\begin{gather} \label{vu1}
v_0=u_0 u_1, \qquad v_1=u_1 u_2, \qquad v_2=u_2 u_3, \qquad v_3= \frac{\beta_0-\alpha u_2}{u_0 u_1 u_2}, \qquad v_4=\frac{\beta_1-\alpha u_3}{u_1 u_2 u_3}.
\end{gather}
In that case, we f\/ind that $I_1$, $I_2$ are two independent functions of the $u_j$, while pulling back the third integral yields
\begin{gather}\label{cas1}
I_3=\beta_0 \beta_1,
\end{gather}
which is a trivial f\/irst integral for the symplectic map $\hat\varphi_m^{(1)}$. Similarly, by iterating the second U-system \eqref{32um1}, we obtain
\begin{gather} \label{vu2}
 v_0 = u_0' u_1' u_2', \qquad v_1 = u_1' u_2' u_3', \qquad  v_2 = \frac{\alpha+\beta'_0 u_2'}{u_0' u_1' u_2'}, \\
 v_3 = \frac{\alpha+\beta'_1 u_3'}{u_1' u_2' u_3'}, \qquad
v_4 = \frac{\alpha \beta'_2+\beta'_0 \beta'_2 u_2'+\al u_0' u_1' (u_2')^2 u_3'}{\al u_2' u_3' +\beta_0' (u_2')^2 u_3'}. \nonumber
\end{gather}
The quantities $I_2$, $I_3$ pull back to two independent functions of the~$u_j'$, while for the f\/irst quantity we f\/ind a trivial f\/irst
integral of the symplectic map~$\hat\varphi_m^{(2)}$, namely
\begin{gather}\label{cas2}
I_1=\beta_0' \beta_1'\beta_2'.
\end{gather}

{\bf Bi-Hamiltonian structure and Liouville integrability.} From (the pullbacks of) the formulae for $I_1$, $I_2$ one can verify directly that, as functions of $u_j$, they are in involution with respect to the bracket~\eqref{br1}, which implies that the symplectic map $\hat\varphi_m^{(1)}$ is Liouville integrable. Similarly, one can check that the same conclusion holds for $\hat\varphi_m^{(2)}$, by using~(\ref{br2}) to verify that $\{I_2,I_3\}_2=0$. However, there is another way to obtain this result, by lifting the brackets for both U-systems to obtain two dif\/ferent Poisson structures for~\eqref{32KdV}.

From the f\/irst U-system, using the formulae (\ref{vu1}) we f\/ind that the Poisson bracket $\{\,,\,\}_1$ in~\eqref{br1} pushes forward to a~bracket in 5 dimensions (denoted here by the same symbol):
\begin{alignat}{3}
&\{v_0,v_1\}_1=v_0v_1, \qquad &&  \{v_0,v_2\}_1=v_0v_2,&\nonumber \\
& \{v_0,v_3\}_1=-v_0v_3-\alpha, \qquad && \{v_0,v_4\}_1=-v_0v_4. &\label{brv1}
\end{alignat}
The quantities $I_1$, $I_2$, $I_3$ found from the trace of the monodromy matrix are in involution with respect to the bracket~(\ref{brv1}). This Poisson bracket has rank~4, with~$I_3$ as a Casimir (this follows from the expression~(\ref{cas1}): a constant function of the~$u_j$ must lift to a~Casimir).

Similarly, pushing forward the second U-system, using the formulae (\ref{vu2}) we f\/ind that the Poisson bracket in~\eqref{br2} lifts to
\begin{alignat}{3}
& \{v_0,v_1\}_2=v_0v_1, \qquad && \{v_0,v_2\}_2=v_0v_2-\alpha, & \\
& \{v_0,v_3\}_2=-v_0v_3, \qquad && \{v_0,v_4\}_2=-v_0v_4+\frac{\alpha^2}{v_2^2}.& \label{brv2}
\end{alignat}
Once again, this is a bracket of rank 4, with~$I_1$ as a Casimir (as follows from~(\ref{cas2}) above). One can show directly that~$I_2$, $I_3$ provide two more independent commuting functions of~$v_j$ with respect to the bracket $\{\, , \, \}_2$ given by~(\ref{brv2}).

It turns out that these two brackets for~\eqref{32KdV} are compatible, in the sense that their sum (and hence any linear combination) also satisf\/ies the Jacobi identity, so is also a Poisson bracket. In fact, we observe that the dif\/ference of the Poisson brackets
\begin{gather}\label{dif}
\{\cdot,\cdot \}_3=\{\cdot,\cdot \}_1-\{\cdot,\cdot \}_2
\end{gather}
coincides (under the transformation $v_i \mapsto \frac{1}{v_i}$ and by inserting the parameter $\alpha$) with the one that is derived from the Lagrangian structure of the lattice KdV equation, recently presented in~\cite{Dinh}. Thus we see that the map \eqref{32KdV} has a~bi-Hamiltonian structure, and the sequence of f\/irst integrals $I_1$, $I_2$, $I_3$ gives a f\/inite Lenard--Magri chain.

\subsection{Comments on the integrability of the general case}

 In the case $M=1$, the Liouville integrability of $(L,1)$ periodic reductions of discrete KdV was proved in~\cite{HKQT}. However, the Liouville integrability for the case of general $(L,M)$ will be the subject of a future publication, and we only comment on it brief\/ly here. The Liouville integrability of the corresponding U-systems follows from the Liouville integrability of the $(L,M)$ periodic reduction of the lattice KdV equation (and vice-versa). For all $(L,M)$ we f\/ind that the two U-systems are of the same dimension and preserve the same log-canonical symplectic structure.

For $L+M$ odd, we can always f\/ind two compatible Poisson structures $\{\, ,\,\}_{1,2}$ for the variab\-les~$v_m$, whose dif\/ference~(\ref{dif}) coincides with the bracket obtained from a discrete Lagrangian in~\cite{Dinh}. Here we just state the corresponding theorem.
\begin{Theorem}
Let $L$, $M$ be coprime with $L>M>1$ and $L+M$ odd. For $0 \leq i<j \leq L+M-1$, the brackets
\begin{gather*}
\{ v_i, v_j \}_1 = \begin{cases}
c_{j-i} v_i v_j, &j-i \neq L, \\
c_{j-i} v_i v_j+ c_N \alpha, &j-i = L,
\end{cases} \\
\{ v_i, v_j \}_2 = \begin{cases}
c_{j-i} v_i v_j, &j-i \neq kM, \\
\displaystyle c_{j-i} v_i v_j +c_M (-\alpha)^k \prod\limits_{l=1}^{k-1} v_{i+l\, M}^{-2}, &j-i = k M,
\end{cases}
\end{gather*}
where
\begin{gather*} %\label{coefmod}
c_k=(-1)^h , \qquad \text{with} \quad h=\frac{k}{M} \bmod(N+M),
\end{gather*}
for $k=1,\dots, L+M-1$, define two compatible Poisson structures on $\mathbb{C}^{L+M}$ of rank $L+M-1$ preserved by the map
\begin{gather*} %\label{KdVmap}
(v_0,v_1,\dots,v_{L+M-1}) \mapsto \left(v_1,v_2,\dots,v_0+\alpha\left(\frac{1}{v_{L}}-\frac{1}{v_{M}}\right)\right),
\end{gather*}
corresponding to the $(L,M)$ periodic reduction of the lattice KdV equation.
\end{Theorem}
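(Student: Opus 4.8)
The plan is to derive both brackets by reduction from the two U-systems attached to the discrete KP reductions \eqref{1HKdV} and \eqref{2HKdV}, and to obtain compatibility from the observation that their difference is already a Poisson structure. First I would invoke Proposition~\ref{kdvred}: for coprime $L>M>1$ the map in the statement --- the $(L,M)$ periodic reduction \eqref{RedkdV} of lattice KdV --- is covered, through $v_m=\tau_m\tau_{m+L+M}/(\tau_{m+M}\tau_{m+L})$, by \emph{either} of the bilinear recurrences \eqref{1HKdV}, \eqref{2HKdV}. Each of these has the form \eqref{quivrec}, hence arises from a cluster algebra, so Theorem~\ref{FH} applies: computing the first rows of the associated exchange matrices $B^{(1)},B^{(2)}$ (everything else being fixed by \eqref{percon1}), one checks that both have rank $r=L+M-1$ --- which is even because $L+M$ is odd, so that the U-systems are genuinely symplectic --- and the reduction \eqref{basis} produces U-systems on $\C^{r}$ carrying nondegenerate log-canonical Poisson brackets obtained by inverting the reduced exchange matrices $\hat B^{(1)},\hat B^{(2)}$.

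Next I would push these brackets forward to the $v$-variables. Comparing $v_m$ with the U-system coordinates, exactly as in \eqref{usystau}--\eqref{vtaus}, writes each $v_m$ as a product of consecutive $u$'s (respectively $u'$'s), so the corresponding monomial map $\C^{L+M-1}\to\C^{L+M}$ intertwines each U-system map with the KdV-reduction map and carries each log-canonical bracket to a bracket on $\C^{L+M}$. Substituting these expressions into the log-canonical brackets, and using the U-system recurrences (in the manner of \eqref{vu1}--\eqref{vu2}) to re-express the shifts of $u$ that leave the initial block, reproduces exactly the brackets $\{\cdot,\cdot\}_1$ and $\{\cdot,\cdot\}_2$ of the statement; the combinatorial content is identifying the exponents in the reduced exchange matrices, which yields the log-canonical coefficients $c_k$ together with the additional $\alpha$- and $v^{-2}$-terms that arise when the periodic coefficients $\beta_m,\beta'_m$ are eliminated. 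Each bracket, being the push-forward of a nondegenerate structure from a space of dimension $L+M-1$, has rank exactly $L+M-1$, with the Casimir provided by lifting the trivial integrals $\prod\beta_j$ (resp.\ $\prod\beta'_j$), cf.\ \eqref{cas1}--\eqref{cas2}; and invariance under the KdV-reduction map follows by a direct check on the explicit formula, the crucial algebraic input being the identity $c_{L+M-k}=-c_k$, which is equivalent to the parity hypothesis $L+M$ odd and is precisely what makes the bracket skew and covariant under the shift underlying the dynamics.

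Finally, for compatibility I would observe that $\{\cdot,\cdot\}_1$ and $\{\cdot,\cdot\}_2$ share the \emph{same} log-canonical part $\sum_{i<j}c_{j-i}v_iv_j\,\partial_i\wedge\partial_j$, so that their difference $\{\cdot,\cdot\}_3$ in \eqref{dif} consists of the correction terms alone. From the Schouten-bracket identity $[\pi_1,\pi_2]=\frac{1}{2}\big([\pi_1,\pi_1]+[\pi_2,\pi_2]-[\pi_1-\pi_2,\pi_1-\pi_2]\big)$, the compatibility of $\pi_1$ and $\pi_2$ --- both already Poisson by the previous steps --- is equivalent to $\pi_3=\pi_1-\pi_2$ being Poisson. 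This is supplied by the identification (after the involution $v_i\mapsto 1/v_i$ and reinstatement of $\alpha$) of $\{\cdot,\cdot\}_3$ with the bracket coming from the discrete Lagrangian structure of lattice KdV in \cite{Dinh}, which satisfies the Jacobi identity; since $v_i\mapsto 1/v_i$ is a diffeomorphism, $\pi_3$ is Poisson and hence $[\pi_1,\pi_2]=0$. The main obstacle I anticipate is the middle step: producing the closed-form coefficients $c_k$ and the precise shape of the $\alpha$- and $v^{-2}$-corrections uniformly in $(L,M)$ requires controlling the inverse of the rank-deficient exchange matrix and tracking how the non-autonomous coefficients $\beta_m,\beta'_m$ propagate under iteration of the U-system recurrence. (Alternatively one could verify the Jacobi identity, compatibility, rank and invariance directly from the stated formulas; the computation is of comparable length, but then the cluster-algebraic origin of the structure is lost.)
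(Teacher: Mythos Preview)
The paper does not actually prove this theorem: immediately after stating it, the authors write ``The proof of this theorem, as well as the detailed description of the U-systems in the general case, will be presented elsewhere.'' So there is no proof in the paper to compare against.

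That said, your proposal follows exactly the template the paper lays out in the worked $(3,2)$ example (Section~4.2): derive each bracket by pushing forward the log-canonical U-system bracket coming from \eqref{1HKdV} and \eqref{2HKdV} via Theorem~\ref{FH}, read off the rank and Casimir from the trivial integrals $\prod\beta_j$, $\prod\beta'_j$, and establish compatibility by identifying the difference \eqref{dif} with the Lagrangian bracket of \cite{Dinh}. This is precisely the strategy the authors signal, so your plan is consistent with their intended approach.

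You are also right about where the real work lies. The step you flag as the ``main obstacle'' --- obtaining the closed formula $c_k=(-1)^h$ with $h=k/M\bmod(N+M)$ and the exact $\alpha$- and $v^{-2}$-corrections uniformly in $(L,M)$ --- is not a detail but the substance of the proof, and nothing in the paper supplies it beyond the single example. Two further points to be careful with: first, the claim that both exchange matrices have rank $L+M-1$ for all coprime $L>M>1$ with $L+M$ odd needs an argument (the paper only verifies the $(3,2)$ case); second, your compatibility step leans on \cite{Dinh} furnishing a Poisson bracket for the general $(q,-p)$ reduction, so you should check that this reference indeed covers the full range of $(L,M)$ you need, rather than only special cases.
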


The proof of this theorem, as well as the detailed description of the U-systems in the general case, will be presented elsewhere.

In the case where $L+M$ is even the situation is slightly dif\/ferent. In this case the U-systems are $(L+M-2)$-dimensional maps but can be lifted to a space of one dimension higher. In this $(L+M-1)$-dimensional space there is an invariant bi-Poisson structure that ensures Liouville integrability. The integrability of the lifted maps implies the integrability of the initial U-systems as well as that of the corresponding $(L,M)$ reductions of the discrete KdV equation.

\section{Conclusions}\label{section5}
We have described the general properties of plane wave reductions of the Hirota--Miwa (discrete KP) equation~\eqref{Hirota}, and have considered two special families of such reductions that correspond to travelling waves of certain lattice equations, of discrete Toda and discrete KdV type, respectively. Both of these families admit a $2\times 2$ Lax representation that gives rise to hyperelliptic spectral curves, whose coef\/f\/icients are f\/irst integrals of the corresponding maps. This is in addition to the Lax pairs obtained in Corollary~\ref{LaxPair}, which are generally of larger size.

The plane wave reductions of the discrete KP equation yield f\/inite genus solutions, and in that case the equation \eqref{Hirota} itself corresponds to the Fay trisecant identity for the theta function of the spectral curve~\cite{finitegenus}. The Fay identity can also be used to derive corresponding solutions of continuous soliton equations via a limiting process~\cite{mumford,taimanov}, so in a sense the discrete Hirota equation \eqref{Hirota} is more fundamental than its continuous counterparts.

It is interesting to note that the discrete Toda and discrete KdV families exhaust all the discrete KP reductions up to order~7, i.e.\ all the three-term Somos recurrences up to Somos-7. If we proceed to higher order discrete KP reductions, then new families appear. For example, the three-term Somos-8 recurrence
\begin{gather*}
\tau_{m+8}\tau_m=\al \tau_{m+7}\tau_{m+1}+\be \tau_{m+5} \tau_{m+3} \end{gather*}
is neither of Toda nor of KdV type: it belongs to a dif\/ferent family of recurrences associated with periodic reductions of a Boussinesq type lattice equation. Concerning three-term Somos-9 and Somos-10 recurrences, all cases except one are included in the Toda, KdV or Boussinesq families. Further details of the Liouville integrable maps arising from these and the other families will be the subject of future work.

\subsection*{Acknowledgements}
Some of these results f\/irst appeared in the Ph.D.~Thesis~\cite{Chloe}, which was supported by EPSRC studentship EP/P50421X/1. ANWH is supported by EPSRC fellowship EP/M004333/1.

\pdfbookmark[1]{References}{ref}
\LastPageEnding

\end{document}